\newcommand{\be}{\begin{equation}}
\newcommand{\ee}{\end{equation}}
\newcommand{\ba}{\begin{eqnarray}}
\newcommand{\ea}{\end{eqnarray}}
\newtheorem{theorem}{Theorem}
\newtheorem{definition}{Definition}
\newtheorem{example}{Example}
\newtheorem{remark}{Remark}
\newtheorem{lemma}{Lemma}
\def\>{\rangle}
\def\<{\langle}
\begin{document}

\title{Certifying the dimensionality of any quantum channel with minimal assumptions}	

\author{Saheli Mukherjee}
\email{mukherjeesaheli95@gmail.com}
\affiliation{S. N. Bose National Centre for Basic Sciences, Block JD, Sector III, Salt Lake, Kolkata 700 106, India}

\author{Bivas Mallick}
\email{bivasqic@gmail.com}
\affiliation{S. N. Bose National Centre for Basic Sciences, Block JD, Sector III, Salt Lake, Kolkata 700 106, India}

\author{Pratik Ghosal}
\email{ghoshal.pratik00@gmail.com}
\thanks{Present address: Harish-Chandra Research Institute,
Chhatnag Road, Jhunsi, Allahabad 211019, India; Homi Bhabha
National Institute, Training School Complex, Anushakti Nagar,
Mumbai 400094, India.}
\affiliation{S. N. Bose National Centre for Basic Sciences, Block JD, Sector III, Salt Lake, Kolkata 700 106, India}

\begin{abstract}
High-dimensional entanglement offers significant advantages over its low-dimensional counterpart in various information-processing tasks. However, to harness these advantages, it is crucial that the quantum channels used to store or transmit the subsystems of an entangled system not only preserve entanglement but also maintain its dimensionality above a certain threshold. The maximum entanglement dimension that a channel can preserve is referred to as its \textit{effective dimensionality}, since the channel cannot be used to transmit information of dimension greater than that in a single use. In this work, we present a method to certify whether a quantum channel can preserve entanglement dimension above a given threshold. Unlike existing approaches, our method is \textit{faithful}, i.e., it can be applied to any channel, and avoids common assumptions such as preparation of entangled states, auxiliary side channels, or perfect measurement devices. Moreover, the method can be extended to faithfully certify other classes of non-resource-breaking channels, such as non-nonpositive-partial-transpose- breaking channels (non-NPT-breaking channels). Finally, we discuss possible experimental realizations of our certification scheme through explicit examples.

\end{abstract}
\maketitle

\section{Introduction}

The program of the `Second Quantum Revolution' aims to harness the fundamental properties of quantum systems to develop technologies with transformative applications in computing, secure communication, and precision sensing \cite{Dowling2003, Deutsch2020}. However, unavoidable interactions of quantum systems with their environment makes them highly susceptible to noise and decoherence. Therefore, to realize quantum technologies in practical settings, it is critical that the devices used to store (quantum memories) and transmit (communication links) quantum systems preserve their essential quantum features that serve as resources for various tasks. Since such devices are mathematically modeled as quantum channels \cite{nielsen2010quantum}, characterizing these channels based on their specific resource-breaking properties \cite{Horodecki2003, pal2015non, heinosaari2015incompatibility, luo2022coherence, ku2022quantifying, muhuri2023information, srinidhi2024quantum,kumar2025fidelity,patra2024qubit,devendra2023mapping,mallick2025higher} and certifying those that are non-resource-breaking is key to determining their suitability for such tasks.

In recent years, the experimental demonstration of control over high-dimensional entangled systems \cite{Dada2011, Malik2016} has sparked a major research focus on efficiently utilizing these systems for secure quantum communication. Unlike the local Hilbert space dimensions of an entangled system, the \textit{dimension of entanglement} quantifies the number of degrees of freedom genuinely involved in the entanglement. In bipartite systems, it is commonly characterized by the Schmidt number of the state \cite{terhal2000schmidt}. High-dimensional entanglement offers several advantages over its low-dimensional counterpart, including increased robustness against noise \cite{PhysRevA.71.044305,groblacher2006experimental,lanyon2009simplifying,PhysRevA.88.032309,Mirhosseini_2015}, enhanced key rates in quantum key distribution \cite{sit2017high}, improved success probabilities in channel-discrimination tasks \cite{bae2019more}, and greater channel capacities in quantum communication \cite{cozzolino2019high}.

To ensure these advantages, it is essential that the quantum channels used in such protocols are not only non-entanglement-breaking but also capable of preserving entanglement above a certain dimensional threshold. A perfect (noiseless) $d$-dimensional quantum channel can faithfully transmit any state of a $d$-dimensional quantum system and perfectly preserve any entanglement the system shares with external parties. In contrast, a noisy channel typically degrades entanglement, with the most extreme case being entanglement-breaking channels \cite{Horodecki2003}, which have zero quantum capacity and effectively behave as classical channels. When the noise is less severe, a channel may not destroy entanglement entirely but can still reduce its dimensionality. Channels exhibiting this behavior are known as \textit{partially entanglement-breaking} \cite{chruscinski2006partially,johnston2008partially,shirokov2013schmidt} or \textit{Schmidt-number-breaking} channels \cite{mallick2024characterization}. A channel is said to be $k$-\textit{Schmidt-number-breaking} ($k$-SNB) if, when acting on one half of any bipartite state, it produces an output with a Schmidt number at most $k$. As a consequence, its coherent information (or single-shot quantum capacity) is upper bounded by $\log k$, which implies that, despite being $d$-dimensional, such a channel cannot be used to reliably transmit more than $k$-dimensional quantum information in a single use. We therefore refer to the maximum entanglement dimension a channel can preserve as its \textit{effective dimensionality}.

In this paper, we certify whether a quantum channel preserves entanglement of dimension greater than~$k$, \textit{i.e.}, whether it is non-$k$-SNB. It turns out that, for a $d$-dimensional channel, it suffices to verify that its action on one half of a maximally entangled state in $\mathbb{C}^d \otimes \mathbb{C}^d$ yields a bipartite state with Schmidt number strictly greater than $k$. The convexity of the set of states with Schmidt number at most $k$ allows one to test this condition using a suitably chosen Schmidt number witness operator \cite{sanpera2001schmidt,shi2024families}, which yields a strictly negative expectation value if the output state's Schmidt number exceeds $k$, and a non-negative value otherwise. The expectation value is obtained by decomposing the witness into a set of local observables, which are then measured on the individual subsystems of the output bipartite state. However, this approach suffers from a practical drawback: it requires complete trust in both the state preparation and measurement devices, and any imperfection in either can lead to false positives, which is particularly undesirable in practical scenarios.

An alternative approach is to certify the Schmidt number of the output state in a fully device-independent manner using Bell nonlocal games \cite{PhysicsPhysiqueFizika.1.195, RevModPhys.86.419}. In this framework, local measurements chosen according to classical random inputs, are performed on the individual subsystems, producing classical outputs. If the resulting input-output correlations cannot be reproduced by any state with Schmidt number at most $k$, for any choice of local measurements, this certifies that the Schmidt number of the underlying state exceeds $k$. Since it relies solely on observed correlations, this method is inherently immune to errors or imperfections in state preparation and measurement devices.

Nevertheless, this approach is also subject to several important constraints: $(i)$ It cannot certify all channels that are not $k$-SNB. This is because, fully device-independent certification of the Schmidt number is not possible for all entangled states \cite{nlz1-h6qr, hirsch2020schmidt}, even for some that exhibit Bell nonlocality. In \cite{nlz1-h6qr}, we, along with our collaborators, demonstrated a family of states that are Bell nonlocal, yet whose Schmidt number cannot be certified via any Bell nonlocal game when the parties are restricted to local projective or two-outcome generalized measurements (POVMs). $(ii)$ The method requires the reliable preparation of entangled states as input. $(iii)$ It assumes access to an additional (ideally noiseless) side channel to preserve the subsystem on which the test channel does not act. $(iv)$ Finally, Bell tests are not robust against particle losses \cite{RevModPhys.86.419, PhysRevA.46.3646}.

Ideally, one would like to have a certification method that is free from all these constraints. However, that does not seem to be possible in a fully device-independent setting. Lifting constraints~$(ii)$ and~$(iii)$, in particular, necessitates a certification method in a time-like scenario, where a fully device-independent approach is fundamentally impossible. On the other hand, in \cite{nlz1-h6qr}, we showed that the Schmidt number of all entangled states can indeed be certified by relaxing the requirement of full device independence in favor of a measurement-device-independent framework using semiquantum nonlocal games with trusted quantum inputs \cite{buscemi2012all} (see also \cite{shi2025schmidt}). While this approach overcomes constraint~$(i)$, constraints~$(ii)$ and~$(iii)$ remain unaddressed.

In this paper, we present a certification method that overcomes all the above constraints, while assuming trust only in the state-preparation devices and \textit{not} in the measurement devices. It is important to note that, while our certification method assumes two independent state-preparation devices, each faithfully preparing a given set of states, it does not impose the stronger requirement (constraint $(ii)$) that entangled states---such as maximally entangled states or states with full Schmidt rank---can be prepared reliably. Following \cite{PhysRevX.8.021033}, we refer to this as a setting with ``\textit{minimal assumptions}.''

In \cite{PhysRevX.8.021033}, Rosset \textit{et al.} introduced the framework of \textit{semiquantum signaling games} and showed that, for any quantum channel, it is possible to certify whether it is non-entanglement-breaking within the minimal-assumption setting using these games. Semiquantum games, in general, provide a useful operational framework for comparing quantum resources. For instance, the performance of shared quantum states in semiquantum nonlocal games \cite{buscemi2012all} provides a necessary and sufficient condition for determining whether one state can be converted into another by local operations and shared randomness (LOSR). Similarly, the convertibility of quantum channels under pre- and post-processing channels assisted by classical communication is determined by their performance in semiquantum signaling games \cite{PhysRevX.8.021033}. Since these games induce a resource-theoretic preorder among quantum states or channels, they can also be used to certify resourceful states or channels.

In the certification of non-entanglement-breaking channels in \cite{PhysRevX.8.021033}, a key element of the proof technique relies on a defining property of entanglement-breaking (EB) channels: they are equivalent to \textit{measure-and-prepare} channels. However, this characterization does not extend to Schmidt-number-breaking channels, preventing a direct generalization of their method. We overcome this limitation by using another structural property of Schmidt-number-breaking channels, formalized in Lemma~\ref{SNBC}. Using this property, we prove that any non-$k$-SNB channel can be certified via appropriately designed semiquantum signaling games. Interestingly, the same property also holds for another important class of channels, namely, nonpositive-partial-transpose-breaking (NPT-breaking) or positive under partial transpose (PPT) channels \cite{horodecki2000binding}, thereby allowing our method to be naturally extended to certify any non-NPT-breaking channel. Finally, we discuss a possible experimental implementation of our certification method. 

A related work \cite{Engineer2025certifying} has recently appeared on the semi-device-independent certification of non-$k$-SNB channels using quantum steering inequalities \cite{designolle2021genuine,de2023complete}. While that method overcomes constraints~$(ii)$ and~$(iii)$, it still suffers from constraint~$(i)$ due to the existence of entangled states whose entanglement, and a fortiori, Schmidt number cannot be certified in any steering-based scenario \cite{cavalcanti2016quantum, de2023complete}.

The rest of the paper is organized as follows. In Sec.~\ref{prelim}, we formally introduce Schmidt-number-breaking channels. Our main result is presented in Sec.~\ref{result}. In Sec.~\ref{implement}, we discuss the experimental implementation of our certification method through explicit examples. Finally, we conclude in Sec.~\ref{conclusion}.

\section{Schmidt-number-breaking Channels}\label{prelim}

Given a bipartite pure state $\ket{\psi}_{AB}\in\mathbb{C}^{d_A} \otimes \mathbb{C}^{d_B}$, it can be represented in the Schmidt decomposition form \cite{peres1997quantum, nielsen2010quantum} as
\begin{equation}
    \ket{\psi}_{AB} = \sum_{i=1}^{k} \sqrt{\lambda_i} \ket{u_i}_A \ket{v_i}_B,
\end{equation}
where \( \{\ket{u_i}_A\} \) and \( \{\ket{v_i}_B\} \) are orthonormal sets in \( \mathbb{C}^{d_A} \) and \( \mathbb{C}^{d_B} \), respectively, and $\sqrt{\lambda_i}$'s are the Schmidt coefficients, satisfying \( \lambda_i \geq 0 \) and \( \sum_{i} \lambda_i = 1 \). The number of non-zero Schmidt coefficients defines the \textit{Schmidt rank} (SR) of the state, with $1\leq \text{SR}(\ket{\psi}_{AB})\leq\min\{d_A,d_B\}$. $\text{SR}(\ket{\psi}_{AB})=1$ if and only if $\ket{\psi}_{AB}$ is a product state; otherwise, $\text{SR}(\ket{\psi}_{AB})$ quantifies the number of local degrees of freedom genuinely involved in the entanglement, i.e., the \textit{entanglement dimension}.

For a mixed bipartite state $\rho_{AB}\in\mathcal{D}(\mathbb{C}^{d_A} \otimes \mathbb{C}^{d_B})$, its entanglement dimension is quantified by its \textit{Schmidt number} (SN), defined as \cite{terhal2000schmidt}:
\begin{equation}
    \text{SN}(\rho_{AB}) = \min_{\{p_j, \ket{\psi_j}_{AB}\}} \left[ \max_j \, \text{SR}(\ket{\psi_j}_{AB}) \right],
\end{equation}
where the minimization is taken over all possible pure-state decomposition of $\rho_{AB}$, i.e., $\rho_{AB}=\sum_j p_j \ket{\psi_j}_{AB}\bra{\psi_j}$, with $\{p_j\}$ forming a probability vector and each $\ket{\psi_j}_{AB}\in\mathbb{C}^{d_A} \otimes \mathbb{C}^{d_B}$.

A quantum channel describes the most general evolution of a quantum state under interaction with an external environment. Mathematically, it is represented by a linear, completely positive (CP), and trace-preserving (TP) map from the space of linear operators on the input Hilbert space, $\mathcal{L}(\mathcal{H}_{\text{in}})$, to the space of linear operators on the output Hilbert space, $\mathcal{L}(\mathcal{H}_{\text{out}})$ \cite{nielsen2010quantum}. For simplicity, we consider $\mathcal{H}_{\text{in}} \cong \mathcal{H}_{\text{out}} \cong \mathcal{H}$, and refer to the map as acting on $\mathcal{L}(\mathcal{H})$; however, the analysis and results extend straightforwardly to the case where the input and output Hilbert spaces have different dimensions.

A channel $\mathcal{E}$ is called \textit{$k$-Schmidt-number-breaking} ($k$-SNB) if
\begin{equation}
\text{SN}\left[(\mathbf{id} \otimes \mathcal{E})(\rho_{AB})\right] \leq k, \quad \forall~\rho_{AB} \in \mathcal{D}(\mathbb{C}^{d_A} \otimes \mathbb{C}^{d_B}),
\end{equation}
where $\mathbf{id}$ is the identity channel on $\mathcal{L}(\mathbb{C}^{d_A})$ and $\mathcal{E}$ acts on $\mathcal{L}(\mathbb{C}^{d_B})$ \cite{chruscinski2006partially}. At first glance, this definition appears to require checking the action of $\mathcal{E}$ on infinitely many input states $\rho_{AB}$. However, Chruściński \textit{et al.}~\cite{chruscinski2006partially} provided an equivalent characterization of $k$-SNB channels in terms of the channel’s Choi-Jamiołkowski (CJ) operator.

According to the \textit{Choi-Jamiołkowski isomorphism} \cite{jamiolkowski1974effective, choi1975positive}, any channel $\mathcal{E}$ acting on $\mathcal{L}(\mathbb{C}^{d_B})$ can be uniquely represented by a bipartite operator $J_{BB'}^{\mathcal{E}} \in \mathcal{D}(\mathbb{C}^{d_B} \otimes \mathbb{C}^{d_{B'}})$, defined as
\begin{equation} \label{choistate}
J_{BB'}^{\mathcal{E}} = (\mathbf{id} \otimes \mathcal{E})(\ket{\Phi}_{BB'}\bra{\Phi}),
\end{equation}
where $\ket{\Phi}_{BB'} = \frac{1}{\sqrt{d_B}} \sum_{i=0}^{d_B-1} \ket{i}_B \ket{i}_{B'}$ is a maximally entangled state in $\mathbb{C}^{d_B} \otimes \mathbb{C}^{d_{B'}}$, and $\mathbb{C}^{d_B} \cong \mathbb{C}^{d_{B'}}$. Given the CJ operator $J^{\mathcal{E}}_{BB'}$, the action of the channel $\mathcal{E}$ on any state $\rho \in \mathcal{D}(\mathbb{C}^{d_B})$ is given by
\begin{equation}\label{inverse-choi}
    \mathcal{E}(\rho) = d_B \Tr_{B'} \left[ (\rho_{B'}^{\intercal} \otimes \mathbb{I}_B) \, J^{\mathcal{E}}_{BB'} \right],
\end{equation}
where $\rho^{\intercal}$ denotes the transpose of $\rho$ in the computational basis, and $\Tr_{B'}$ denotes the partial trace over subsystem $B'$.

\begin{lemma} \label{Choilemma}
    A channel $\mathcal{E}$ is a $k$-SNB channel if and only if $\text{SN}(J_{BB'}^{\mathcal{E}})\leq k$.
\end{lemma}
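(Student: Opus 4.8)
The plan is to prove both implications using the Choi–Jamiołkowski correspondence together with the inversion formula \eqref{inverse-choi}. The nontrivial direction is ``$\text{SN}(J_{BB'}^{\mathcal{E}})\leq k \implies \mathcal{E}$ is $k$-SNB,'' since the converse follows immediately: if $\mathcal{E}$ is $k$-SNB then applying the defining inequality to the particular input $\rho_{AB}=\ket{\Phi}_{BB'}\bra{\Phi}$ gives $\text{SN}(J_{BB'}^{\mathcal{E}})=\text{SN}[(\mathbf{id}\otimes\mathcal{E})(\ket{\Phi}\bra{\Phi})]\leq k$.

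For the forward direction, suppose $\text{SN}(J_{BB'}^{\mathcal{E}})\leq k$, so we may write $J_{BB'}^{\mathcal{E}}=\sum_j p_j \ket{\psi_j}_{BB'}\bra{\psi_j}$ with each $\text{SR}(\ket{\psi_j}_{BB'})\leq k$. Fix an arbitrary input $\rho_{AB}\in\mathcal{D}(\mathbb{C}^{d_A}\otimes\mathbb{C}^{d_B})$; by purifying and invoking the fact that the Schmidt number cannot increase under local operations, it suffices to handle a pure input $\ket{\chi}_{AB}$. The key step is to push $\ket{\chi}_{AB}$ through the map $\mathbf{id}\otimes\mathcal{E}$ using \eqref{inverse-choi}: writing $\mathbf{id}\otimes\mathcal{E}$ acting on the $B$ system in terms of $J^{\mathcal{E}}_{BB'}$, one obtains
\begin{equation}
(\mathbf{id}\otimes\mathcal{E})(\ket{\chi}_{AB}\bra{\chi})=\sum_j p_j\,\Tr_{B}\!\left[(\ket{\chi}_{AB}\bra{\chi})^{\intercal_B}\otimes\mathbb{I}\,\bigl(\mathbb{I}_A\otimes\ket{\psi_j}_{BB'}\bra{\psi_j}\bigr)\right],
\end{equation}
with the output living on $A B'$. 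Each term is (up to normalization) of the form $M_j^{\dagger}\rho_{A}M_j$ relative to a Kraus-type operator built from $\ket{\psi_j}$, so each term is a subnormalized pure state on $AB'$ whose Schmidt rank across $A|B'$ is at most $\text{SR}(\ket{\psi_j})\leq k$, because the $B'$-support of that pure state is contained in the $B'$-support of $\ket{\psi_j}$, which has dimension $\leq k$. Hence the output is a convex combination of (subnormalized) pure states each of Schmidt rank $\leq k$, giving $\text{SN}[(\mathbf{id}\otimes\mathcal{E})(\ket{\chi}_{AB}\bra{\chi})]\leq k$, and the same bound for mixed $\rho_{AB}$ by the decomposition argument.

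The main obstacle is bookkeeping the transpose and partial-trace gymnastics in \eqref{inverse-choi} cleanly enough to see that each summand is genuinely a pure state on $AB'$ with controlled $B'$-support; the cleanest route is to express $\ket{\psi_j}_{BB'}=\sum_{r=1}^{\le k}\sqrt{\mu_r^{(j)}}\ket{a_r}_B\ket{b_r}_{B'}$ in its own Schmidt basis and substitute, which makes the dimension of the $B'$-support manifestly $\leq k$ and turns the contraction over $B$ into an explicit operator acting only on the $B$ factor of $\ket{\chi}_{AB}$. One should also remark that this is precisely the statement established by Chruściński \emph{et al.}~\cite{chruscinski2006partially}, so the lemma may alternatively be cited; we include the short argument for completeness.
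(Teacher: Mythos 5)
Your proposal is correct, but note that the paper itself offers no proof of this lemma at all: it simply defers to Chru\'sci\'nski \emph{et al.}~\cite{chruscinski2006partially}, so there is no in-paper argument to compare against, and what you have written is essentially the standard argument from that reference made explicit. Your easy direction (feed the maximally entangled state into the defining inequality) is exactly right, and your forward direction is sound: a pure-state decomposition $J^{\mathcal{E}}_{BB'}=\sum_j p_j\ket{\psi_j}\bra{\psi_j}$ with $\mathrm{SR}(\ket{\psi_j})\le k$ is, via vectorization $\ket{\psi_j}\propto(\mathbb{I}\otimes M_j)\sum_i\ket{i}\ket{i}$, the same thing as a Kraus decomposition of $\mathcal{E}$ with operators $M_j$ of rank at most $k$, so for a pure input the output is $\sum_j(\mathbb{I}_A\otimes M_j)\ket{\chi}\bra{\chi}(\mathbb{I}_A\otimes M_j)^{\dagger}$, a mixture of subnormalized pure states each of Schmidt rank at most $\mathrm{rank}(M_j)\le k$. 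Two small points of hygiene: your phrase ``of the form $M_j^{\dagger}\rho_A M_j$'' is misstated --- the sandwich acts as $\mathbb{I}_A\otimes M_j$ on the full bipartite vector, not on the reduced state $\rho_A$ --- though your subsequent support argument makes clear you mean the right thing; and for mixed inputs you should pick one of your two routes rather than gesture at both: the convex-decomposition route ($\rho_{AB}=\sum_i q_i\ket{\chi_i}\bra{\chi_i}$, then collect all the rank-$\le k$ pure pieces) is the simpler, while the purification route also works but needs the extra remark that tracing out the purifying system is a local CP map on the $A$ side and hence cannot increase the Schmidt number. With those touch-ups your argument is a complete, self-contained proof of the lemma that the paper only cites, and it is consistent with the inverse-Choi formula \eqref{inverse-choi} used throughout the paper.
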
 
The proof of this lemma can be found in \cite{chruscinski2006partially}. In particular, entanglement-breaking (EB) channels correspond to the special case $k = 1$.

Let us now define the set of all $k$-SNB channels as $k$-$\mathbb{SNBC}$. This set forms a convex and compact subset within the real vector space of all quantum channels and satisfies the following strict inclusion hierarchy:
\begin{small}
\begin{align}
1\text{-}\mathbb{SNBC} \subset 2\text{-}\mathbb{SNBC} \subset \cdots \subset k\text{-}\mathbb{SNBC}\subset\cdots d_{B}\text{-}\mathbb{SNBC}.
\end{align}
\end{small}

Due to the convexity and compactness of the $k\text{-}\mathbb{SNBC}$ set \cite{mallick2024characterization}, channels that are not $k$-SNB can be separated from this set using the Hahn-Banach separation theorem \cite{holmes2012geometric}. Specifically, for any channel $\mathcal{N}\notin k\text{-}\mathbb{SNBC}$, there exists a Hermitian operator $W$ (referred to as a witness operator) \cite{chruscinski2014entanglement} such that:
\begin{align}\label{witness}
    &\Tr \left[W J^{\mathcal{N}}_{BB'}\right]<0,\nonumber\\
    \text{and}\quad&\Tr \left[W J^{\mathcal{E}}_{BB'}\right]\geq0,\quad\forall~\mathcal{E}\in k\text{-}\mathbb{SNBC}. 
\end{align}
Note, however, that no single witness operator $W$ can, in general, detect all channels outside of $k$-$\mathbb{SNBC}$; that is, different non-$k$-SNB channels may require different witnesses.

Let us now move on to our method to certify that a given channel $\mathcal{N}$ is not $k$-SNB.

\section{Certification in the minimal-assumptions setting}\label{result}

We consider the framework of \textit{semiquantum signaling games}, introduced in \cite{PhysRevX.8.021033}. In this setting, a referee prepares a quantum system in a state chosen from a set of non-orthogonal states $\{\psi^x_A\}$ and sends it to a player, say Alice, who knows the identity of the set but not the specific state prepared. She then applies a quantum channel $\mathcal{N}$ to the system. This can correspond to two physically distinct but mathematically equivalent scenarios: if $\mathcal{N}$ represents a quantum memory, Alice stores the system in the memory, and the output state corresponds to the evolved state of the system retrieved from the memory at a later time in her lab; if $\mathcal{N}$ describes a communication link, Alice transmits the system through it to a different location, and the output state corresponds to the evolved state of the system in that location. In either case, we refer to the party in possession of the output state $\mathcal{N}(\psi^x_A)$ as Bob, even if it is Alice herself at a later time. The referee then sends another quantum system, prepared in a state from a known set of non-orthogonal states $\{\phi^y_B\}$, to Bob. Finally, Bob performs a joint measurement $M = \{\Pi^b_{AB}\}$ on the two systems and produces a classical output $b$. An illustration of this scenario is shown in Fig.~\ref{pic1}.

From the resulting temporal correlation
\begin{align} \label{prob}
    p(b|\psi_A^x, \phi_B^y) = \Tr\left[\left(\mathcal{N}(\psi_A^x) \otimes \phi_B^y\right)\Pi^b_{AB}\right],    
\end{align}
we can certify that $\mathcal{N}$ is not a $k$-SNB channel \textit{iff} no $k$-SNB channel $\mathcal{E}$ can reproduce this correlation for any choice of measurement $M$. To determine this, a suitable payoff function $\mathscr{J}(b,x,y)$ is assigned in each run of the game, based on the inputs $(\psi_A^x, \phi_B^y)$ and output $b$. The average payoff is then given by
\begin{align} \label{avgpayoff}
    \mathscr{J}_{\text{avg}}=\sum_{b,x,y}\mathscr{J}(b,x,y)~p(b|\psi_A^x, \phi_B^y)~p(\psi_A^x)~p(\phi_B^y),    
\end{align}
which must satisfy
\begin{align}\label{game_witness}
\left.\begin{aligned}
    &\mathscr{J}_{\text{avg}} \geq 0,~~\forall~\mathcal{E}\in k\text{-}\mathbb{SNBC},\\    
\text{and}\quad &\mathscr{J}_{\text{avg}} <0,~~\text{when}~\mathcal{N}\not\in k\text{-}\mathbb{SNBC}.
\end{aligned}\right\}
\end{align}
Here, $p(\psi_A^x)$ and $p(\phi_B^y)$ denote the probabilities with which the respective systems are prepared in the states $\psi_A^x$ and $\phi_B^y$.

\begin{figure}[t!]
\includegraphics[width=0.5\textwidth]{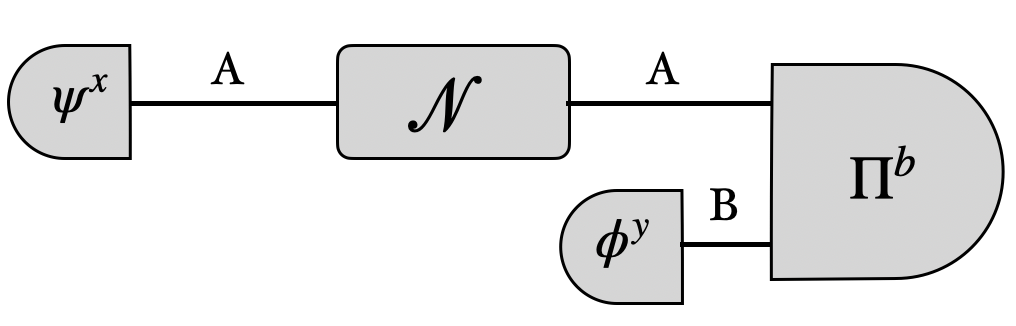} 
\caption{A schematic setup for the certification of non-$k$-SNB channels in a semiquantum signaling game with trusted quantum inputs.}\label{pic1}
\centering
\end{figure}

We now show that for every channel that is not $k$-SNB, one can construct a semiquantum signaling game with appropriate sets of states $\{\psi^x_A\}$, $\{\phi^y_B\}$, and a payoff function $\mathscr{J}(b,x,y)$ that certifies its non-$k$-SNB property. Before going to the main theorem, we first establish the following lemma.

\begin{lemma} \cite{mallick2024characterization} \label{SNBC}
    Let $\mathcal{E}$ be a $k$-SNB channel. Then, for any CP map $\mathcal{F}$, $\mathcal{F} \circ \mathcal{E}$ is a $k$-SNB CP map.
\end{lemma}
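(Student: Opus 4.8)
The plan is to work with the Choi–Jamio\l kowski characterisation of Lemma~\ref{Choilemma}: proving that $\mathcal{F}\circ\mathcal{E}$ is $k$-SNB is equivalent to showing that its CJ operator has Schmidt number at most $k$. So the first step is to express $J^{\mathcal{F}\circ\mathcal{E}}_{BB'}$ in terms of $J^{\mathcal{E}}_{BB'}$. Since $J^{\mathcal{F}\circ\mathcal{E}}_{BB'}=(\mathbf{id}\otimes(\mathcal{F}\circ\mathcal{E}))(\proj{\Phi}_{BB'})=(\mathbf{id}\otimes\mathcal{F})\bigl[(\mathbf{id}\otimes\mathcal{E})(\proj{\Phi}_{BB'})\bigr]=(\mathbf{id}\otimes\mathcal{F})(J^{\mathcal{E}}_{BB'})$, the problem reduces to the following purely state-level claim: if $\sigma_{BB'}$ has Schmidt number at most $k$, then $(\mathbf{id}\otimes\mathcal{F})(\sigma_{BB'})$ also has Schmidt number at most $k$ for any CP map $\mathcal{F}$.

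Second, I would prove this state-level monotonicity directly from the definition of Schmidt number. Take an optimal pure-state decomposition $\sigma_{BB'}=\sum_j p_j\proj{\psi_j}$ with $\mathrm{SR}(\ket{\psi_j})\le k$ for all $j$. Write each $\ket{\psi_j}$ in Schmidt form $\ket{\psi_j}=\sum_{i=1}^{k}\sqrt{\lambda^{(j)}_i}\,\ket{u^{(j)}_i}_B\ket{v^{(j)}_i}_{B'}$. Apply $\mathbf{id}\otimes\mathcal{F}$ and use a Kraus representation $\mathcal{F}(\cdot)=\sum_\mu K_\mu(\cdot)K_\mu^\dagger$, so that $(\mathbf{id}\otimes\mathcal{F})(\proj{\psi_j})=\sum_\mu \proj{\xi_{j,\mu}}$ where $\ket{\xi_{j,\mu}}=(\mathbb{I}_B\otimes K_\mu)\ket{\psi_j}=\sum_{i=1}^{k}\sqrt{\lambda^{(j)}_i}\,\ket{u^{(j)}_i}_B\otimes K_\mu\ket{v^{(j)}_i}_{B'}$. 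The key observation is that $\ket{\xi_{j,\mu}}$ is a superposition of at most $k$ product vectors, so its Schmidt rank is at most $k$. Hence $(\mathbf{id}\otimes\mathcal{F})(\sigma_{BB'})=\sum_{j,\mu} p_j\,\proj{\xi_{j,\mu}}$ is (up to normalisation, which does not change the Schmidt number as long as the operator is nonzero) a convex combination of pure states each of Schmidt rank $\le k$, which by definition means its Schmidt number is at most $k$.

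Third, I would tie up the loose ends. Since $\mathcal{F}$ is only CP (not necessarily trace-preserving), $\mathcal{F}\circ\mathcal{E}$ is a CP map rather than a channel, and $(\mathbf{id}\otimes\mathcal{F})(\sigma_{BB'})$ need not be normalised; but the Schmidt number is defined for any nonzero positive semidefinite operator via its decompositions into (unnormalised) rank-one positive terms, and the argument above goes through verbatim—one just carries the unnormalised vectors $\sqrt{p_j}\,\ket{\xi_{j,\mu}}$. (If the operator is zero the statement is vacuous.) Thus $J^{\mathcal{F}\circ\mathcal{E}}_{BB'}$ has Schmidt number at most $k$, and invoking Lemma~\ref{Choilemma} in the form appropriate for CP maps yields that $\mathcal{F}\circ\mathcal{E}$ is $k$-SNB.

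I do not anticipate a serious obstacle here—the result is essentially the statement that Schmidt number is non-increasing under one-sided local operations, which is a basic entanglement-monotonicity fact. The only point requiring mild care is bookkeeping around normalisation, since $\mathcal{F}$ is merely completely positive; one should state the convention that for a nonzero positive operator $X$, $\mathrm{SN}(X)$ is computed via decompositions $X=\sum_j \proj{\phi_j}$ into unnormalised vectors and equals $\mathrm{SN}(X/\Tr X)$, so that all the definitions quoted in Sec.~\ref{prelim} apply unchanged.
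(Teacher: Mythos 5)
Your proposal is correct and follows essentially the same route as the paper: compute $J^{\mathcal{F}\circ\mathcal{E}}=(\mathbf{id}\otimes\mathcal{F})(J^{\mathcal{E}})$, use the fact that a one-sided CP map cannot increase the Schmidt number, and conclude via the Choi characterisation of $k$-SNB maps. The only difference is that the paper simply cites the Schmidt-number monotonicity under local operations, whereas you prove it explicitly with a Kraus decomposition and handle the normalisation bookkeeping for non-trace-preserving $\mathcal{F}$ --- a harmless elaboration of the same argument.
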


\begin{proof}
Consider the CJ operator $J^{\mathcal{F}\circ\mathcal{E}}_{AA'}$ of the composite map $\mathcal{F}\circ\mathcal{E}$, which can be written as
\begin{align*}
    J^{\mathcal{F}\circ\mathcal{E}}_{AA'}&=(\mathbf{id}\otimes\mathcal{F}\circ\mathcal{E})(\ket{\Phi}_{AA'}\bra{\Phi}),\\
    &=(\mathbf{id}\otimes\mathcal{F})\left[(\mathbf{id}\otimes\mathcal{E})(\ket{\Phi}_{AA'}\bra{\Phi})\right],\\
    &=(\mathbf{id}\otimes\mathcal{F})~J^{\mathcal{E}}_{AA'}
\end{align*}
where $J^{\mathcal{E}}_{AA'}$ is the CJ operator of channel $\mathcal{E}$. Now, since $\mathcal{E}$ is a $k$-SNB channel, we have $\text{SN}(J^{\mathcal{E}}_{AA'})\leq k$. This implies that $\text{SN}(J^{\mathcal{F}\circ\mathcal{E}}_{AA'})\leq k$, because the action of a local operation (CP map) $\mathcal{F}$ on $J^{\mathcal{E}}_{AA'}$ cannot increase its Schmidt number \cite{terhal2000schmidt}. Therefore, $\mathcal{F}\circ\mathcal{E}$ is a $k$-SNB CP map.
\end{proof}
\begin{theorem} \label{Theo1}
    For every channel $\mathcal{N} \notin k\text{-}\mathbb{SNBC}$, there exists a semiquantum signaling game for which the average payoff satisfies $\mathscr{J}_{\text{avg}}(\mathcal{N}) < 0$, while $\mathscr{J}_{\text{avg}}(\mathcal{E}) \geq 0$ for all $\mathcal{E} \in k\text{-}\mathbb{SNBC}$.
\end{theorem}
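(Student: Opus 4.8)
The plan is to realize the Hahn–Banach witness $W$ from Eq.~\eqref{witness} as the average payoff of a suitably designed semi-quantum signaling game. The starting point is the witness operator guaranteed to exist for the given channel $\mathcal{N} \notin k\text{-}\mathbb{SNBC}$: there is a Hermitian $W$ on $\mathbb{C}^{d_B}\otimes\mathbb{C}^{d_{B'}}$ with $\Tr[W J^{\mathcal{N}}_{BB'}]<0$ and $\Tr[W J^{\mathcal{E}}_{BB'}]\geq 0$ for every $\mathcal{E}\in k\text{-}\mathbb{SNBC}$. I would first expand $W$ in a product operator basis, $W=\sum_{x,y} c_{xy}\, \psi_A^x{}^{\intercal}\otimes \phi_B^y$, where the $\{\psi_A^x\}$ and $\{\phi_B^y\}$ are (rescaled) non-orthogonal states spanning the operator spaces of the two subsystems — any informationally complete set of states works, and the transpose on the $A$ side is inserted to match the inverse Choi formula \eqref{inverse-choi}. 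These are exactly the two ensembles the referee will use in the game.

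The key computation is to rewrite $\Tr[W J^{\mathcal{E}}_{BB'}]$ in terms of the game correlation $p(b|\psi_A^x,\phi_B^y)$. Using Lemma~\ref{SNBC}: when Alice applies a $k$-SNB channel $\mathcal{E}$ and then Bob's measurement element $\Pi^b_{AB}$ is applied to $\mathcal{E}(\psi_A^x)\otimes\phi_B^y$, the whole process from the referee's inputs to Bob's outcome factors as a CP map post-composed with $\mathcal{E}$; the associated Choi operator therefore still has Schmidt number at most $k$, so it must yield a non-negative value when paired with $W$. Concretely, I would show that $\Tr[(\mathcal{E}(\psi_A^x)\otimes\phi_B^y)\Pi^b_{AB}]$ equals $\Tr[(\psi_A^x{}^{\intercal}\otimes\phi_B^y\otimes \mathbb{I})\,(\text{something built from }\Pi^b \text{ and }J^{\mathcal{E}})]$, so that choosing the payoff coefficients $\mathscr{J}(b,x,y)$ proportional to $c_{xy}$ (summed against whatever degrees of freedom Bob controls via $b$ and $M$) makes $\mathscr{J}_{\text{avg}}(\mathcal{E})$ equal to $\Tr[W J^{\mathcal{E}'}_{BB'}]$ for an \emph{effective} $k$-SNB Choi operator $J^{\mathcal{E}'}$, hence $\geq 0$. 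For the honest channel $\mathcal{N}$, Bob simply picks the measurement $M$ that "reads off" $J^{\mathcal{N}}_{BB'}$ via the teleportation-type identity, yielding $\mathscr{J}_{\text{avg}}(\mathcal{N})=\Tr[W J^{\mathcal{N}}_{BB'}]<0$. Absorbing $p(\psi_A^x)p(\phi_B^y)$ into the payoff normalization is harmless since these probabilities are fixed and positive.

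The main obstacle — and the step I would spend the most care on — is the bookkeeping that converts "Bob's freedom to choose $M$" into "an arbitrary CP map $\mathcal{F}$ composed after $\mathcal{E}$," so that Lemma~\ref{SNBC} applies uniformly over \emph{all} measurements rather than just the honest one. One has to check that for \emph{any} POVM $\{\Pi^b_{AB}\}$ and the fixed ensembles $\{\psi_A^x\},\{\phi_B^y\}$, the map $\sigma \mapsto \big(\,p(b|\psi^x,\phi^y)\,\big)_{b,x,y}$ obtained when the channel has Choi operator $\sigma$ is a positive linear functional of a CP-image of $\sigma$; equivalently, that the "effective Choi operator" extracted by Bob's side-system-and-measurement is obtained from $J^{\mathcal{E}}$ by a completely positive map acting on the $B'$ factor alone. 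Once that is in place, Lemma~\ref{SNBC} (together with the Schmidt-number monotonicity it encodes) gives $\mathrm{SN}\le k$ for the effective operator, and pairing with $W$ closes the argument. I would also note at the end that nothing in this construction used $k$-SNB beyond the content of Lemma~\ref{SNBC}, so the identical proof certifies non-NPT-breaking channels by replacing $k\text{-}\mathbb{SNBC}$ with the PPT channels and $W$ with the corresponding separating witness.
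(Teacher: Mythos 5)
Your proposal follows essentially the same route as the paper: decompose the Hahn--Banach witness into products of states that become the referee's trusted inputs, take the payoff coefficients from that decomposition, let the generalized Bell projector realize $\mathscr{J}_{\text{avg}}(\mathcal{N})=\Tr[W J^{\mathcal{N}}]<0$, and for an arbitrary POVM identify $\Pi^{0}_{AB}$ with the Choi operator of a CP map $\mathcal{F}$ so that Lemma~\ref{SNBC} bounds the effective Choi operator's Schmidt number by $k$. The ``bookkeeping'' step you flag is exactly the paper's adjoint-map manipulation, $\Tr[\mathcal{E}([\xi^x]^{\intercal})\,\mathcal{F}(\zeta^y)]=\Tr[(\mathcal{F}^{*}\circ\mathcal{E})([\xi^x]^{\intercal})\,\zeta^y]$, so your outline is correct and matches the published argument.
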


\begin{proof}
Recall that for all $\mathcal{N} \notin k\text{-}\mathbb{SNBC}$, there exists a witness operator $W$ satisfying Eq.~\eqref{witness}. Since the set of density operators spans the space of linear operators and $W$ is Hermitian, it can be expressed as  
\begin{equation}
    W_{AA'} = \sum_{x, y} \gamma_{x,y} \, (\xi^{x}_{A'} \otimes \zeta^{y}_{A}), \label{witnesss}
\end{equation}
where $\gamma_{x,y} \in \mathbb{R}$ for all $x, y$, and $\{\xi^{x}_{A'}\}$ and $\{\zeta^{y}_{A}\}$ are density operators on $\mathcal{H}_{A'}$ and $\mathcal{H}_A$, respectively, with $\mathcal{H}_A \cong \mathcal{H}_{A'}$. Note that this decomposition is, in general, non-unique.

Now consider a semiquantum signaling game defined by the input states $\psi_A^x = [\xi_A^x]^{\intercal}$ and $\phi_B^y = [\zeta_B^y]^{\intercal}$ for all $x, y$. Bob performs a two-outcome measurement \( M = \{\Pi^0_{AB}, \Pi^1_{AB}\} \), and the payoff function $\mathscr{J}(b,x,y)$ satisfies
\begin{align}\label{payoff}
    p(\psi_A^x)\, p(\phi_B^y)\, \mathscr{J}(b,x,y) =
    \begin{cases}
        \gamma_{x,y} & \text{if } b = 0, \\
        0 & \text{otherwise}.
    \end{cases}
\end{align}
We now show that this signaling game certifies that $\mathcal{N}$ is a non-$k$-SNB channel. For this the following conditions are to be satisfied:

\begin{enumerate}[label=(\roman*)]
    \item If $\mathcal{N} \notin k\text{-}\mathbb{SNBC}$, then there exists at least one measurement setting for which $\mathscr{J}_{\text{avg}} (\mathcal{N}) < 0$.
    \item For all measurement settings chosen by Bob, $\mathscr{J}_{\text{avg}} (\mathcal{E}) \ge 0$ for all $ \mathcal{E} \in k\text{-}\mathbb{SNBC}$.
\end{enumerate}
\textbf{Proof of the first condition:}  
    Let us consider that Bob performs the two-outcome joint measurement $\Pi^0_{AB}={\ket{\Phi} _{AB}\bra{\Phi}}$ and $\Pi^1_{AB}=\mathbb{I}_{AB} - \ket{\Phi}_{AB} \bra{\Phi}$ where $\ket{\Phi}_{AB} = \frac{1}{\sqrt{d}} \sum_{i=0}^{d-1} \ket{i}_A \ket{i}_B$, and $\mathcal{H}_A \cong \mathcal{H}_B \cong \mathbb{C}^d$. The average payoff then reads:
    \begin{align}
    \mathscr{J}_{\text{avg}}(\mathcal{N}) &= \sum_{x,y} \gamma_{x,y} \Tr_{AB}\left[ \left( \mathcal{N}([\xi^x_A]^{\intercal}) \otimes [\zeta^y_B]^{\intercal} \right) \ket{\Phi}_{AB}\bra{\Phi} \right] \nonumber \\
    \label{int}
\end{align}
    
From the Choi matrix $J^{\mathcal{N}}_{A'A}$, one can find the channel action $\mathcal{N}([\xi^{x}_{A}]^\intercal)$ from the inverse Choi isomorphism (given by Eq.~\eqref{inverse-choi}) as following:
\begin{equation} \label{inversechoi}
\begin{split}
    \mathcal{N}([\xi^{x}_{A}]^\intercal) =&  d_A \Tr_{A'}\left[J^{\mathcal{N}}_{A'A} \{{([\xi^{x}_{A'}]^\intercal)^\intercal} \otimes \mathbf{I}_A\}\right] \\
   =& d_A \Tr_{A'}\left[J^{\mathcal{N}}_{A'A} \{{[\xi^{x}_{A'}]} \otimes \mathbf{I}_A\}\right]  = {\mathcal{R}}^x_A 
\end{split}
\end{equation}
Substituting this in Eq.~\eqref{int}, we get

\begin{equation}
\begin{split}
  & \mathscr{J}_{\text{avg}}(\mathcal{N})\\
  &= \sum_{x,y} \gamma_{x,y} \Tr_{AB}\left[(\mathcal{R}^x_A \otimes [\zeta^{y}_{B}]^\intercal) \ket{\Phi}_{AB}\bra{\Phi}\right] \\
  &= \sum_{x,y} \gamma_{x,y}\Tr_{A} \left[\Tr_{B} \left\{ (\mathbb{I}_A \otimes [\zeta^{y}_{B}]^\intercal) \ket{\Phi}_{AB}\bra{\Phi} (\mathcal{R}^x_A \otimes \mathbb{I}_{B}) \right\}\right],
\end{split}
\label{modifypayoff}
\end{equation}

     Now again from Inverse Choi Isomorphism (Eq.~\eqref{inverse-choi}) we get
     \begin{equation}
     \begin{split}
         & \Tr_{B}\left[{} \{\mathbb{I}_A \otimes {([\zeta^{y}_{B}]^\intercal)}\} \ket{\Phi} _{AB}\bra{\Phi}\right]\\
         &= d_A \Tr_{B}\left[ \{\mathbf{I}_A \otimes {([\zeta^{y}_{B}]^\intercal)}\} J^{\mathbf{id}}_{AB}\right] \\
        & =\mathbf{id} (\zeta^{y}_{A})\\
         &=[\zeta^{y}_{A}]  
         \end{split}
         \label{identityinversechoi}
     \end{equation}
where, $J^{\mathbf{id}}_{AB}$ is the Choi operator corresponding to the identity map i.e., $J^{\mathbf{id}}_{AB} = {\ket{\Phi}}_{AB} \bra{\Phi} $.\\
Now substituting Eq.~\eqref{identityinversechoi} in Eq.~\eqref{modifypayoff}, we get 
\begin{equation}
\begin{split}
\mathscr{J}_{\text{avg}}(\mathcal{N})
     = & \sum_{x,y} \gamma_{x,y} \Tr_{A}\left[ [\zeta^{y}_{A}]\hspace{0.1cm} {\mathcal{R}}^x_A\right] \\
     = & d_A \sum_{x,y} \gamma_{x,y} \Tr_{A'A}\left[ (\xi^{x}_{A'} \otimes \zeta ^{y}_{A})\hspace{0.1cm} J^{\mathcal{N}}_{A'A}   \right] \hspace{0.5cm} \\
      = & d_A \Tr_{AA'}\left[ W_{AA'}J^{\mathcal{N}}_{AA'} \right] \\
       < & 0.
\end{split}
\label{pay}
\end{equation}
where the second and third lines are derived from Eqs.~\eqref{inversechoi} and \eqref{witnesss} respectively. 

\textbf{Proof of the second condition:}  
Bob performs the two-outcome measurement $M = \{\Pi^0_{AB}, \Pi^1_{AB}\}$. The average payoff, for this case, is given by
\begin{align}
    \mathscr{J}_{\text{avg}}(\mathcal{E})\nonumber
    &= \sum_{x,y} \gamma_{x,y} \Tr_{AB}\left[ \left( \mathcal{E}([\xi^x_A]^{\intercal}) \otimes [\zeta^y_B]^{\intercal} \right) \Pi^0_{AB} \right]\\
    &= \sum_{x,y} \gamma_{x,y}\Tr_{A} \left[\Tr_{B} \left\{ (\mathbb{I}_A \otimes [\zeta^{y}_{B}]^\intercal) \Pi
    ^{0}_{AB}\right\}\ (\mathcal{E} ([\xi^{x}_{A}]^{\intercal})\otimes \mathbb{I}_{B}) \right]\label{pospay}
\end{align}
Since $\Pi^0_{AB}$ is a positive operator, by the CJ isomorphism, we can consider $\Pi^0_{AB}$ as the CJ operator of a CP map $\mathcal{F}$ (not necessarily a channel), i.e., $\Pi^0_{AB} \equiv J^{\mathcal{F}}_{AB}$. Then, using Eq.~\eqref{inverse-choi}, we have
\begin{equation}
 d_A \Tr_B\left[ \left(\mathbb{I}_A \otimes [\zeta^{y}_{B}]^\intercal \right) \Pi^{0}_{AB}\right] =\mathcal{F} (\zeta^{y}_{A})   \label{inverseofpovm}
\end{equation}
Substituting this in Eq.~\eqref{pospay}, we get
\begin{align}
    \mathscr{J}_{\text{avg}}(\mathcal{E}) &= \sum_{x,y} \gamma_{x,y} \Tr_{A}\left[ \mathcal{E}([\xi^x_A]^{\intercal})\, \mathcal{F}(\zeta^y_A) \right] \nonumber \\
    &= \sum_{x,y} \gamma_{x,y} \Tr_{A}\left[ \left(\mathcal{F}^* \circ \mathcal{E} ([\xi^x_A]^{\intercal})\right)~\zeta^y_A \right], \label{inter}
\end{align}
where $\mathcal{F}^*$ is the adjoint map of $\mathcal{F}$. Since the adjoint of a CP map is also a CP map,  Lemma~\ref{SNBC} implies that $\mathcal{F}^* \circ \mathcal{E}$ is a $k$-SNB CP map. Using Eq.~\eqref{inverse-choi} again, we get
\begin{equation}
   \mathcal{F}^* \circ \mathcal{E} ([\xi^x_A]^{\intercal}) =  d_A \Tr_{A'}\left[ \left(\mathbb{I}_A \otimes \xi^{x}_{A'} \right) J^{\mathcal{F}^* \circ \mathcal{E}}_{AA'}\right] 
\end{equation}
Substituting this in Eq.~\eqref{inter}, the average payoff becomes
\begin{align}
    \mathscr{J}_{\text{avg}}(\mathcal{E}) &= d_A \sum_{x,y} \gamma_{x,y} \Tr_{AA'}\left[ (\xi^x_{A'} \otimes \zeta^y_A) J^{\mathcal{F}^* \circ \mathcal{E}}_{AA'} \right] \nonumber \\
    &= d_A \Tr_{AA'}\left[ W_{AA'} J^{\mathcal{F}^* \circ \mathcal{E}}_{AA'} \right] \nonumber\\
    & \geq 0,
\end{align}
where the last line follows from Eq.~\eqref{witnesss}. The final inequality follows from the properties of $W$ as a $k$-SNB witness and the fact that $\text{SN}(J^{\mathcal{F}^* \circ \mathcal{E}}_{AA'}) \leq k$.
\end{proof}

\begin{remark}
    Our certification method avoids the risk of falsely identifying a $k$-SNB channel as non-$k$-SNB due to measurement imperfections by exploiting the key property of $k$-SNB channels given in Lemma~\eqref{SNBC}. This property also applies to EB channels, since they correspond to the special case $k=1$. In \cite{PhysRevX.8.021033}, the same guarantee for EB channels was obtained via their alternative characterization as `measure-and-prepare' channels, which does not generalise to arbitrary $k$-SNB channels. Furthermore, our approach naturally extends to another generalization of EB channels, namely nonpositive-partial-transpose-breaking (NPT-breaking) or PPT channels \cite{horodecki2000binding}. The action of such channels on one half of any bipartite state yields an output state that is positive under partial transpose. These channels are efficiently characterized by their CJ operators: a channel is NPT-breaking iff its CJ operator is PPT. An analogue of Lemma~\ref{SNBC} also holds in this setting: if $\mathcal{P}$ is NPT-breaking, then for any CP map $\mathcal{F}$, the composition $\mathcal{F}\circ\mathcal{P}$ is an NPT-breaking CP map. This is discussed in detail in \ref{Appendix A}. 
    
    More generally, our method applies to any non-resource-breaking channel whose resource-breaking counterpart satisfies the analogues of Lemmas \ref{Choilemma} and \ref{SNBC}.
\end{remark}

 Note that our certification scheme is robust against particle loss and detector inefficiency (limitation $(iv)$ in the Introduction). In standard Bell tests, undetected events due to particle loss or detector inefficiency are often discarded. As a result, the observed correlations arise from a biased subset of detected events, which can lead to a false violation of Bell inequalities \cite{PhysRevD.2.1418,PhysRevA.46.3646,RevModPhys.86.419}. This limitation is often referred in the literature as the \textit{``detection loophole''}. In contrast, in our scheme undetected events may be assigned the outcome $b=1$. Since the payoff associated with this outcome is zero [Eq.~\eqref{payoff}], such events do not contribute to the violation of the inequalities in Eq.~\eqref{game_witness}. Importantly, all experimental rounds---including those corresponding to particle loss---are therefore included in the statistics. Consequently, particle loss can only reduce the observed violation and may lead to inconclusive results (false negatives), but it cannot increase the payoff beyond the bound satisfied by channels with bounded Schmidt number. Hence particle loss cannot lead to false certification in our protocol.

\section{Experimental implementation}\label{implement}

 Our method of certifying non-$k$-SNB channels is implementable in standard optical or light--matter hybrid platforms. In particular, recent experimental demonstrations of non-EB channel verification \cite{PhysRevLett.124.010503, PhysRevLett.127.160502} can, with suitable modifications, be adapted to realize our scheme.  

\begin{figure}[t!]
\includegraphics[width=0.7\textwidth]{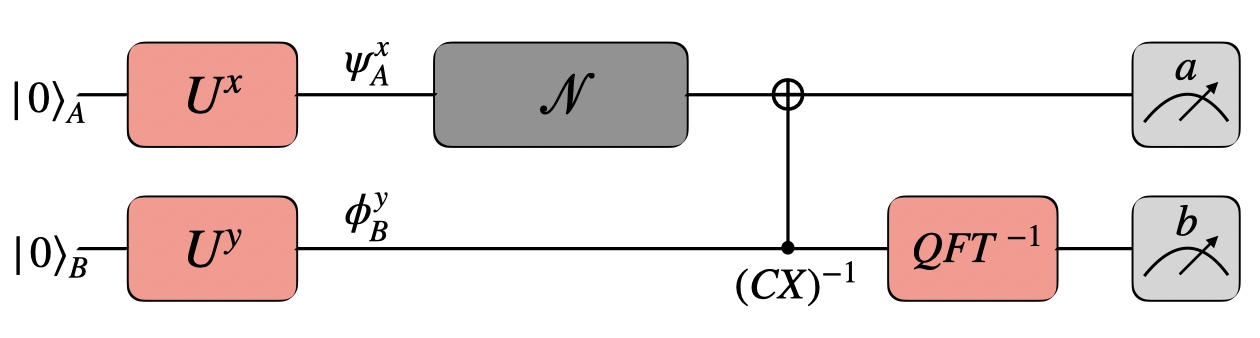} 
\caption{Circuit for experimental realization of the certification scheme for an unknown channel $\mathcal{N}$.}\label{fig1}
\centering
\end{figure}

In Fig.~\ref{fig1}, we illustrate the corresponding experimental circuit. Two families of unitary gates, $\{U_A^x\}$ and $\{U_B^y\}$, prepare the input states $\{\psi_A^x\}$ and $\{\phi_B^y\}$. For each run, values of $x$ and $y$ are chosen randomly, and the corresponding unitary gates are implemented. The state $\psi_A^x$ is then passed through the test channel $\mathcal{N}$, whose non-$k$-SNB property is to be certified. This is followed by an \textit{inverse controlled-shift} gate, where $\phi_B^y$ acts as the control and $\mathcal{N}(\psi_A^x)$ as the target. The controlled-shift operation is defined by 
\begin{align}
    CX \ket{i}\ket{j} = \ket{i \oplus_d j}\ket{j},    
\end{align}
where $i,j \in \{0,1,\dots,d-1\}$, with its inverse given by $(CX)^{-1} = (CX)^{\dagger}$. An \textit{inverse quantum Fourier transform} ($QFT^{-1}$) is then applied to the control, where the Fourier transform acts as 
\begin{align}
    QFT \ket{j} = \tfrac{1}{\sqrt{d}} \sum_{k=0}^{d-1} e^{i \frac{2\pi jk}{d}} \ket{k},~~\text{where}~j,k \in \{0,1,\dots,d-1\}.   
\end{align}
Finally, both systems are measured individually in the computational basis, producing the outcome pair $(a,b)$.

The sequence of the inverse controlled-shift gate and the inverse QFT gate, followed by measurement in the computational basis, effectively implements a measurement in the generalized Bell basis. The outcome pair $(a,b) = (0,0)$ corresponds to the measurement effect $\ket{\Phi}_{AB}\bra{\Phi}$, where $\ket{\Phi}_{AB}=\frac{1}{\sqrt{d}}\sum_{i=0}^{d-1}\ket{i}_A\ket{i}_B$. Therefore, to evaluate $\mathscr{J}_{\text{avg}}$, it suffices to collect the statistics of the outcome $(a,b)=(0,0)$ for different choices of $x$ and $y$.

It follows from the proof of Theorem \ref{Theo1} that the choice of the families of unitary gates and the parameter $\gamma_{x,y}$ depends on the specific Schmidt number witness operator employed to certify the channel. In what follows, we provide these specifications for an optimal Schmidt number witness operator, which, by definition, detects the maximal set of non-$k$-SNB channels \cite{sanpera2001schmidt}.\\  

\begin{example}
    Consider the depolarizing and dephasing channels on $\mathcal{L}(\mathbb{C}^3)$, whose actions are respectively given as
\begin{align}
    &\mathcal{D}_{\lambda}(\rho)=(1-\lambda)\rho+\lambda\Tr(\rho)\frac{\mathbb{I}_3}{3},\label{qutritdepol}\\
    \text{and}~~&\Phi_{\lambda}(\rho)=(1-\lambda)\rho+\lambda\sum_{i=0}^2 \ket{i}\bra{i}\rho\ket{i}\bra{i}\label{qutritdephase},
\end{align}
where $\lambda\in[0,1]$ is the noise parameter. We are interested in certifying the range of $\lambda$ for which they are not $2$-SNB. 
\end{example}

\begin{figure}[t!]
\includegraphics[width=1.1\textwidth]{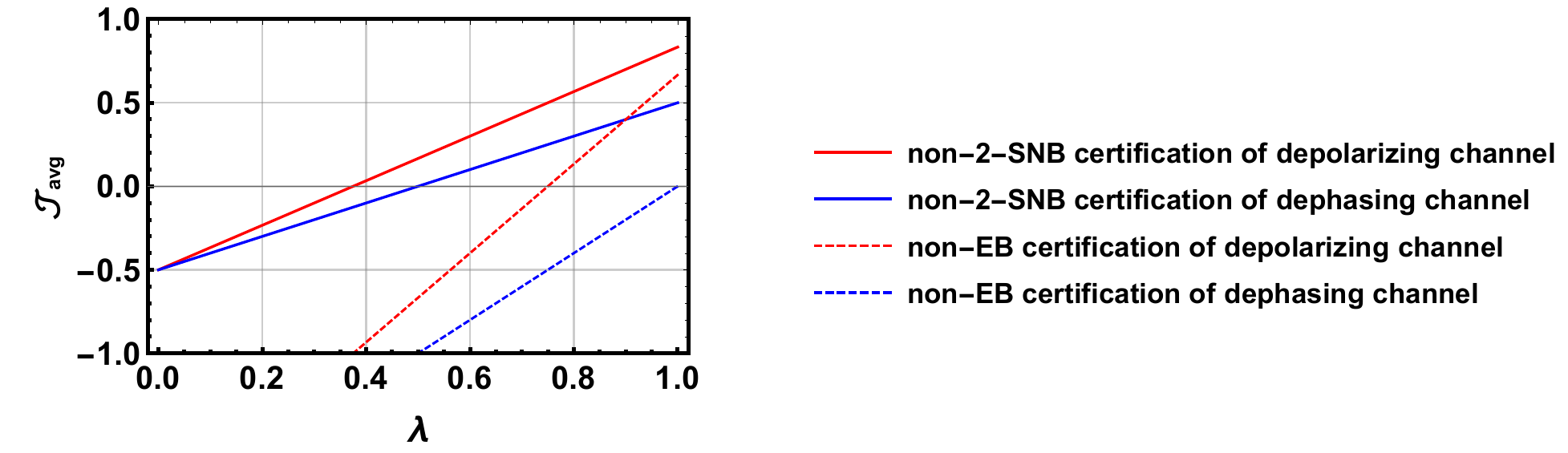} 
\caption{ Variation of the average payoff $\mathcal{J}_{\text{avg}}$ with respect to the noise parameter $\lambda$ for depolarizing and dephasing channels. The blue (red) curve corresponds to the dephasing (depolarizing) channel. A negative payoff in the solid (dotted) blue curve indicates the range of noise parameters for which the dephasing channel is non-$2$-SNB (non–EB). As evident from the plot, the dephasing channel is non-$2$-SNB for $0 \le \lambda < \frac{1}{2}$ and non-EB for $\lambda \in [0,1)$. As the noise parameter increases, the channel becomes $2$-SNB and eventually transitions to being entanglement-breaking (EB) at $\lambda=1$. A similar behavior is observed for the depolarizing channel, where the negative payoff in the solid and dotted red curves marks the non-$2$-SNB and non-EB regimes, respectively. In particular, the depolarizing channel is non-$2$-SNB for $0 \le \lambda < \frac{3}{8}$, and non-EB for $0 \le \lambda < \frac{3}{4}$. The average payoffs certifying the non-EB and non-$2$-SNB properties of the channels arise from different games with distinct payoff functions.}\label{plot2}
\centering
\end{figure}

To this end, we consider the following witness operator \cite{sanpera2001schmidt} detecting channels $\notin 2$-$\mathbb{SNBC}$:
\begin{align}\label{wit}
    W_2=\mathbb{I}_3\otimes\mathbb{I}_3-\frac{3}{2}\mathcal{P}_3,
\end{align}
where $\mathcal{P}_3$ denotes the projector onto the state $\ket{\Phi_3}=\frac{1}{\sqrt{3}}\sum_{i=0}^2\ket{ii}$. In particular, for depolarizing and dephasing channels, the negativity of the expectation value of $W_2$ is both a necessary and sufficient condition for the channels to be non-$2$-SNB. \( W_2 \) can be expressed as
\[
W_2 = \sum_{x,y=1}^9\gamma_{x, y}~(\xi^{x}\otimes\xi^{y}),
\]
where the coefficients \(\gamma_{x,y}\) are represented in the form of a symmetric matrix given by:
\begin{align}
    \{\gamma_{x,y}\}_{x,y=1}^9 \equiv\begin{pmatrix}
        \frac{1}{2} & 1 & 1 & \frac{1}{4} & \frac{1}{4} & 0 & -\frac{1}{4} & -\frac{1}{4} & 0 \\
        1 & \frac{1}{2} & 1 & \frac{1}{4} & 0 & \frac{1}{4} & -\frac{1}{4} & 0 & -\frac{1}{4}\\
        1 & 1 & \frac{1}{2} & 0 & \frac{1}{4} & \frac{1}{4} & 0 & -\frac{1}{4} & -\frac{1}{4}\\
        \frac{1}{4} & \frac{1}{4} & 0 & -\frac{1}{4} & 0 & 0 & 0 & 0 & 0\\
        \frac{1}{4} & 0 & \frac{1}{4} & 0 & -\frac{1}{4} & 0 & 0 & 0 & 0\\
        0 & \frac{1}{4} & \frac{1}{4} & 0 & 0 & -\frac{1}{4} & 0 & 0 & 0\\
        -\frac{1}{4} & -\frac{1}{4} & 0 & 0 & 0 & 0 & \frac{1}{4} & 0 & 0\\
        -\frac{1}{4} & 0 & -\frac{1}{4} & 0 & 0 & 0 & 0 & \frac{1}{4} & 0\\
        0 & -\frac{1}{4} & -\frac{1}{4} & 0 & 0 & 0 & 0 & 0 & \frac{1}{4}
    \end{pmatrix}
\end{align}

and $\xi^{x(y)}=\ketbra{\xi^{x(y)}}{\xi^{x(y)}}$'s are given by
\begin{align} \label{state}
    \begin{array}{ll}
            \ket{\xi^1}= \ket{0}, \quad & \quad \ket{\xi^{6}} = \frac{1}{\sqrt{2}}(\ket{1}+\ket{2}), \\
            \ket{\xi^{2}} = \ket{1}, \quad & \quad \ket{\xi^{7}} = \frac{1}{\sqrt{2}}(\ket{0} + i\ket{1}), \\
            \ket{\xi^{3}} = \ket{2}, \quad & \quad \ket{\xi^{8}} =  \frac{1}{\sqrt{2}}(\ket{0} + i\ket{2}), \\
            \ket{\xi^{4}} = \frac{1}{\sqrt{2}}(\ket{0}+\ket{1}), \quad & \quad \ket{\xi^{9}} =  \frac{1}{\sqrt{2}}(\ket{1} + i\ket{2}). \\
            \ket{\xi^{5}} = \frac{1}{\sqrt{2}}(\ket{0}+\ket{2}),
        \end{array}
\end{align}

The families of unitary gates are then given by
\begin{align*}
    \begin{array}{ll}
           U^1= \mathbb{I}_3, \quad & \quad U^6 =  \begin{pmatrix}
0 & 1 & 0\\
\frac{1}{\sqrt{2}} & 0 & \frac{1}{\sqrt{2}}\\
\frac{1}{\sqrt{2}} & 0 & \frac{-1}{\sqrt{2}}
\end{pmatrix} \vspace{0.3cm},\\
           U^2 =\begin{pmatrix}
0 & 1 & 0\\
1 & 0 & 0\\
0 & 0 & 1
\end{pmatrix}, \vspace{0.3cm} \quad & \quad U^7 = \begin{pmatrix}
\frac{1}{\sqrt{2}} & \frac{1}{\sqrt{2}} & 0\\
\frac{-i}{\sqrt{2}} & \frac{i}{\sqrt{2}} & 0\\
0 & 0 & 1
\end{pmatrix} \vspace{0.3cm},\\
            U^3 =\begin{pmatrix}
0 & 0 & 1\\
0 & 1 & 0\\
1 & 0 & 0
\end{pmatrix}, \vspace{0.3cm} \quad & \quad U^8 = \begin{pmatrix}
\frac{1}{\sqrt{2}} & 0 & \frac{1}{\sqrt{2}}\\
0 & 1 & 0\\
\frac{-i}{\sqrt{2}} & 0 & \frac{i}{\sqrt{2}}\\
\end{pmatrix} \vspace{0.3cm},  \\
            U^4 = \begin{pmatrix}
\frac{1}{\sqrt{2}} & \frac{1}{\sqrt{2}} & 0\\
\frac{1}{\sqrt{2}} & \frac{-1}{\sqrt{2}} & 0\\
0 & 0 & 1
\end{pmatrix} \vspace{0.3cm}, \quad & \quad U^9 = \begin{pmatrix}
0 & 1 & 0\\
\frac{1}{\sqrt{2}} & 0 & \frac{1}{\sqrt{2}}\\
\frac{-i}{\sqrt{2}} & 0 & \frac{i}{\sqrt{2}}\\
\end{pmatrix} \vspace{0.3cm}, \\
            U^5 = \begin{pmatrix}
\frac{1}{\sqrt{2}} & 0 & \frac{1}{\sqrt{2}}\\
0 & 1 & 0\\
\frac{1}{\sqrt{2}} & 0 & \frac{-1}{\sqrt{2}}
\end{pmatrix} \vspace{0.3cm},
        \end{array}
\end{align*}
such that $U^{x}\ket{0}=\ket{\psi^{x}}$ and $U^{y}\ket{0}=\ket{\phi^{y}}$, where $\psi^{x}=[\xi^{{x}}]^{\intercal}$ and $\phi^{y}=[\xi^{{y}}]^{\intercal}$.

Assigning the payoff function
\[\mathscr{J}(b,x,y) = \left\{
        \begin{array}{ll}
            81\gamma_{x,y} & \text{if } b=0, \\
            0 & \text{otherwise},
        \end{array}
    \right.\]
the average payoff $\mathscr{J}_{\text{avg}}$ obtained for the qutrit depolarizing and dephasing channels in Eqs. \eqref{qutritdepol} and \eqref{qutritdephase} are illustrated in Fig. \ref{plot2}. The negativity of $\mathscr{J}_{\text{avg}}$ certifies the noise-parameter range for which the channels are non-$2$-SNB.

 Note that one can similarly certify the range of $\lambda$ for which the channels are non-EB (also shown in Fig.~\ref{plot2}) by replacing the witness operator $W_2$ in Eq.~\eqref{wit} with an entanglement witness of the form \cite{sanpera2001schmidt}:
\begin{align}
W_1=\mathbb{I}_3\otimes\mathbb{I}_3-3\mathcal{P}_3.
\end{align}

\section{Conclusion}\label{conclusion}

We have presented a method to certify the non-$k$-Schmidt-number-breaking property of a given quantum channel, i.e., its ability to preserve entanglement of dimension exceeding $k$. Using a semiquantum game in the temporal setting, we have shown that non-$k$-SNB channels can generate input–output correlations that cannot be reproduced by any $k$-SNB channel under any choice of measurements, thereby enabling their certification. Our method differs from existing approaches, such as those based on Bell-nonlocal games or quantum steering, in that it is applicable to the certification of all non-$k$-SNB channels.

Although our method is not fully device-independent, it is measurement-device-independent and, importantly, eliminates the need for additional assumptions, such as the preparation of entangled input states or the availability of auxiliary noiseless side channels. The only assumption underlying our approach is that the devices preparing the input (product) quantum states for the semiquantum game are perfect. This assumption is more natural than trusting the measurement devices, which are inherently exposed to interactions with the external environment and therefore more susceptible to imperfections. While our certification method is robust against imperfections in the measurement devices, analyzing its robustness against imperfections or noise in the preparation devices remains an interesting direction for future work.

This work generalizes a previous framework of measurement-device-independent certification of non-entanglement-breaking channels to the broader family of non-$k$-SNB channels. This generalization relies on two key properties of Schmidt-number-breaking completely positive maps: $(i)$ their equivalent characterization in terms of their Choi–Jamiołkowski (CJ) operators, and $(ii)$ their closure under concatenation with arbitrary completely positive maps. Consequently, our method is sufficient to certify any non-resource-breaking channels whose resource-breaking counterparts satisfy these two conditions. Exploring whether channels that do not meet these conditions can also be certified within this minimal-assumption setting remains an intriguing open problem.

\begin{acknowledgements}
The authors acknowledge Manik Banik, Nirman Ganguly, Amit Kundu, and Sahil Gopalkrishna Naik for helpful discussions. B.M. acknowledges the DST INSPIRE fellowship program for financial support. P.G. acknowledges financial support from the project entitled “Technology Vertical - Quantum Communication” under the National Quantum Mission of the Department of Science and Technology (DST) (Sanction Order No. DST/QTC/NQM/QComm/2024/2 (G)).
\end{acknowledgements} 

\bibliography{main}

\onecolumngrid
\appendix

\section{Certification of all non-NPT-breaking channels with minimal assumptions}\label{Appendix A}
While we have explicitly discussed the certification of non-$k$-Schmidt number breaking channels in the minimal assumption scenario, our formalism can be applied to certify any non-resource-breaking channel that satisfies the analogues of Lemmas \ref{Choilemma} and \ref{SNBC}. Here, we show how our certification scheme can be applied to certify non-NPT-breaking channels by providing an explicit example. To begin with, we begin by briefly reviewing the definition and basic properties of NPT states and NPT-breaking channels.

\begin{definition}
    A bipartite state $\rho_{AB}$ is said to nonpositive under partial transpose (NPT) if
    \begin{align}
        \rho_{AB}^{\intercal_{B}}\not\geq 0.
    \end{align}
\end{definition}

It is well known that a state can be NPT only if it is entangled \cite{nielsen2010quantum}. However, the converse does not hold in general: there exist entangled states $\sigma_{AB}$ that are positive under partial transpose (PPT), i.e., $\sigma^{\intercal_{B}}_{AB}\geq 0$ \cite{PhysRevA.59.1070,PhysRevLett.82.5385}. NPT entangled states are particularly valuable resources for several quantum information processing tasks, including entanglement distillation \cite{PhysRevLett.76.722,PhysRevLett.80.5239}, quantum teleportation \cite{PhysRevA.60.1888}, channel discrimination \cite{PhysRevLett.102.250501}, work extraction \cite{PhysRevX.5.041011}, among others. The use of NPT states in the context of quantum channels has also been explored \cite{PhysRevA.103.062422}. Consequently, identifying and certifying channels that preserve the NPT character of entangled states is of fundamental and operational importance.

\begin{definition} \label{def1}
    A channel $\mathcal{S}: \mathcal{L}(\mathbb{C}^{d_B})\rightarrow \mathcal{L}(\mathbb{C}^{d_B})$ is called an NPT-breaking channel if
    \begin{equation*}
        \left[(\mathbf{id} \otimes \mathcal{S})(\rho_{AB})\right] \in \mathbb{PPT}, \quad \forall~\rho_{AB} \in \mathcal{D}(\mathbb{C}^{d_A} \otimes \mathbb{C}^{d_B})
    \end{equation*}
    where $\mathbb{PPT}\subset \mathcal{D}(\mathbb{C}^{d_A}\otimes\mathbb{C}^{d_B})$ represents the set of all states that remain positive under partial transpose (PPT).
\end{definition}
An equivalent characterization of NPT-breaking channels, analogous to that of k-SNB channels [Lemma \ref{Choilemma}], is given by the following lemma \cite{horodecki2000binding}.
\begin{lemma} \label{lemma3}
    A channel $\mathcal{S}$ is NPT-breaking if and only if 
    \begin{equation*}
        \left[(\mathbf{id} \otimes \mathcal{S}) (J_{BB'}^{\mathcal{S}})\right] \in \mathbb{PPT}\subset \mathcal{D}(\mathbb{C}^{d_B}\otimes\mathbb{C}^{d_{B'}}).
    \end{equation*}
    where $J_{BB'}^{\mathcal{S}}$ is the CJ operator of the channel $\mathcal{S}$, as defined in Eq.~\eqref{choistate}.
\end{lemma}

The proof of this Lemma can be found in \cite{horodecki2000binding}.\\

From definition \ref{def1} and the convexity and compactness of the set $\mathbb{PPT}$ \cite{szymanski2017convex}, it follows that the set of all NPT-breaking channels, denoted by $\mathbb{NPTBC}$, forms a convex and compact subset of the set of all quantum channels on $\mathcal{L}(\mathbb{C}^{d_B})$. Therefore, by the Hahn-Banach separation theorem \cite{holmes2012geometric}, for any non-NPT-breaking channel $\mathcal{R}$, there exists a witness operator $\tilde{W}$, satisfying
\begin{align}
    &\Tr(\tilde{W} J_{BB'}^{\mathcal{R}}) <
    0,\\
    \text{and}\quad &\Tr(\tilde{W} J_{BB'}^{\mathcal{S}}) \ge 0, \quad \forall \mathcal{S} \in \mathbb{NPTBC},
\end{align}
that separates it from the set of NPT-breaking channels.

Furthermore, NPT-breaking channels also obey a property analogous to Lemma \ref{SNBC}:
\begin{lemma} \label{lemma4}
    Let $\mathcal{S}$ be an NPT-breaking channel. Then, for any CP map $\mathcal{F}$, $\mathcal{F} \circ \mathcal{S}$ is an NPT-breaking CP map. 
\end{lemma}
\begin{proof}
    From lemma \ref{lemma3}, it suffices to prove that the CJ operator of the composite map $\mathcal{F} \circ \mathcal{S}$ is a PPT operator. Consider the CJ operator ($J^{\mathcal{F}\circ\mathcal{S}}_{AA'}$) of the composite map $\mathcal{F}\circ\mathcal{S}$, which can be written as
\begin{align*}
    J^{\mathcal{F}\circ\mathcal{S}}_{AA'}&=(\mathbf{id}\otimes\mathcal{F}\circ\mathcal{S})(\ket{\Phi}_{AA'}\bra{\Phi}),\\
    &=(\mathbf{id}\otimes\mathcal{F})\left[(\mathbf{id}\otimes\mathcal{S})(\ket{\Phi}_{AA'}\bra{\Phi})\right],\\
    &=(\mathbf{id}\otimes\mathcal{F})~J^{\mathcal{S}}_{AA'}
\end{align*}
where $J^{\mathcal{S}}_{AA'}$ is the CJ operator of channel $\mathcal{S}$. Now, since $\mathcal{S}$ is an NPT-breaking channel, we have $J^{\mathcal{S}}_{AA'} \in \mathbb{PPT}$ . Since local operations are a subset of PPT-preserving operations, the action of a local CP map on a PPT state gives a positive operator, which is PPT. Therefore, $\mathcal{F}\circ\mathcal{S}$ is an NPT-breaking CP map.
\end{proof}

Lemmas \ref{lemma3} and \ref{lemma4}, along with the convexity and compactness of the set $\mathbb{NPTBC}$ enable the certification of all non-NPT-breaking channels with minimal assumptions via semiquantum signaling games.

The certification follows the same general framework as the certification of non-$k$-SNB channels discussed in Sec.~\ref{result}, differing only in the choice of the payoff function. This difference arises because the witness operators for non-NPT-breaking channels are different than those used for non-$k$-SNB channels. The payoff function must satisfy 
\begin{align}
\left.\begin{aligned}
    &\mathscr{\tilde{J}}_{\text{avg}} \geq 0,~~\forall~\mathcal{S}\in \mathbb{NPTBC},\\    
\text{and}\quad &\mathscr{\tilde{J}}_{\text{avg}} <0,~~\text{when}~\mathcal{R}\not\in \mathbb{NPTBC}
\end{aligned}\right\}.
\end{align}

We have the following theorem, which is similar to Theorem \ref{Theo1}. 
\begin{theorem} \label{theo2}
    For every channel $\mathcal{R} \notin \mathbb{NPTBC}$, there exists a semiquantum signaling game for which the average payoff satisfies $\mathscr{\tilde{J}}_{\text{avg}}(\mathcal{R}) < 0$, while $\mathscr{\tilde{J}}_{\text{avg}}(\mathcal{S}) \geq 0$ for all $\mathcal{S} \in \mathbb{NPTBC}$.
\end{theorem}

\begin{proof}
    The proof is exactly similar to that of Theorem \ref{Theo1}, except that the payoff function $\mathscr{\tilde{J}}(b,x,y)$ satisfies
    \begin{align*}
    p(\psi_A^x)\, p(\phi_B^y)\, \mathscr{\tilde{J}}(b,x,y) =
    \begin{cases}
        \tilde{\gamma}_{x,y} & \text{if } b = 0, \\
        0 & \text{otherwise}
    \end{cases}
\end{align*}
where $\tilde{\gamma}_{x,y} \in \mathbb{R}$.
\end{proof}

An explicit illustration of this is provided in the example below.
\begin{example} 
    Consider a channel acting on $\mathcal{L}(\mathbb{C}^3)$ given by \cite{horodecki2000binding}
    \begin{equation} \label{channel}
        \Lambda(\cdot)=\frac{2}{7}(\cdot)+\frac{\alpha}{7} \sum_{i=1}^{3}P_{i\oplus1 i}(\cdot)P_{i  i\oplus1}+\frac{5-\alpha}{7} \sum_{i=1}^{3}P_{i\ominus1 i}(\cdot)P_{i  i\ominus1}
    \end{equation}
    where $P_{ij}=\ket{i}\bra{j}$, $\oplus$ and  $\ominus$ denote the $+$ and $-$ modulo $3$ respectively, and $2 \le \alpha \le 5$.
\end{example}
The Choi operator corresponding to this channel is the two-qutrit state given by \cite{PhysRevLett.82.1056}
\begin{equation}
    J^{\Lambda}=\frac{2}{7} \ket{\Phi_3}\bra{\Phi_3}+\frac{\alpha}{7}\sigma_+\frac{5-\alpha}{7}\sigma_-
\end{equation}
where 
\begin{equation*}
    \sigma_+=\frac{1}{3}(\ket{01}\bra{01}+\ket{12}\bra{12}+\ket{20}\bra{20}),
\end{equation*}
\begin{equation*}
    \sigma_-=\frac{1}{3}(\ket{10}\bra{10}+\ket{21}\bra{21}+\ket{02}\bra{02}),
\end{equation*}
\begin{equation*}
    \ket{\Phi_3}=\frac{1}{\sqrt{3}}(\ket{00}+\ket{11}+\ket{22}).
\end{equation*}

We are interested in finding the range of $\alpha$ for which the channel given by Eq.~\eqref{channel} is non-NPT-breaking. 

We use the following decomposable witness
\begin{equation}
    \tilde{W}=2 \ket{\Phi_3}\bra{\Phi_3}-3 \sigma_+
\end{equation}
This witness can be expressed as
\begin{equation}
    \tilde{W} = \sum_{x,y=1}^9\tilde{\gamma}_{x, y}~(\xi^{x}\otimes\xi^{y}),
\end{equation}

where the coefficients \( \tilde{\gamma}_{x,y} \) are represented in the form of a matrix given by
\begin{align} \label{2ndpayoff}
    \{\tilde{\gamma}_{x,y}\}_{x,y=1}^9 \equiv\begin{pmatrix}
        \frac{2}{3} & -1 & 0 & -\frac{1}{3} & -\frac{1}{3} & 0 & \frac{1}{3} & \frac{1}{3} & 0 \vspace{0.2cm}\\ 
        0 & \frac{2}{3} & -1 & -\frac{1}{3} & 0 & -\frac{1}{3} & \frac{1}{3} & 0 & \frac{1}{3} \vspace{0.2cm}\\
        -1 & 0 & \frac{2}{3} & 0 & -\frac{1}{3} & -\frac{1}{3} & 0 & \frac{1}{3} & \frac{1}{3} \vspace{0.2cm}\\
        -\frac{1}{3} & -\frac{1}{3} & 0 & \frac{1}{3} & 0 & 0 & 0 & 0 & 0 \vspace{0.2cm}\\
        -\frac{1}{3} & 0 & -\frac{1}{3} & 0 & \frac{1}{3} & 0 & 0 & 0 & 0 \vspace{0.2cm}\\
        0 & -\frac{1}{3} & -\frac{1}{3} & 0 & 0 & \frac{1}{3} & 0 & 0 & 0 \vspace{0.2cm}\\
        \frac{1}{3} & \frac{1}{3} & 0 & 0 & 0 & 0 & -\frac{1}{3} & 0 & 0 \vspace{0.2cm}\\
        \frac{1}{3} & 0 & \frac{1}{3} & 0 & 0 & 0 & 0 & -\frac{1}{3} & 0 \vspace{0.2cm}\\
        0 & \frac{1}{3} & \frac{1}{3} & 0 & 0 & 0 & 0 & 0 & -\frac{1}{3}
    \end{pmatrix}
\end{align}
where the values of $x(y)$ are denoted by the rows (columns) of this matrix, and $\xi^{x(y)}$ is given by Eq.~\eqref{state}. Since the inputs to Alice and Bob are the same as for the certification of non-$k$-SNB channels, the certification of non-NPT-breaking channels follows a similar circuit implementation, as discussed in Sec.~\ref{implement}. The coefficients $\tilde{\gamma}_{x,y}$ given by Eq.~\eqref{2ndpayoff} along with the payoff function 
{\[\mathscr{\tilde{J}}(b,x,y) = \left\{
        \begin{array}{ll}
            81\tilde{\gamma}_{x,y} & \text{if } b=0, \\
            0 & \text{otherwise}
        \end{array},
    \right.\] }gives the average payoff $\mathscr{\tilde{J}}_{\text{avg}}(\Lambda) < 0$ for $4 < \alpha \le 5$. 
    This certifies that the channel given by Eq.~\eqref{channel} is non-NPT-breaking for $4 < \alpha \le 5$, as shown in Figure \ref{plot3}.
    \begin{figure}[t!]
\includegraphics[width=0.5\textwidth]{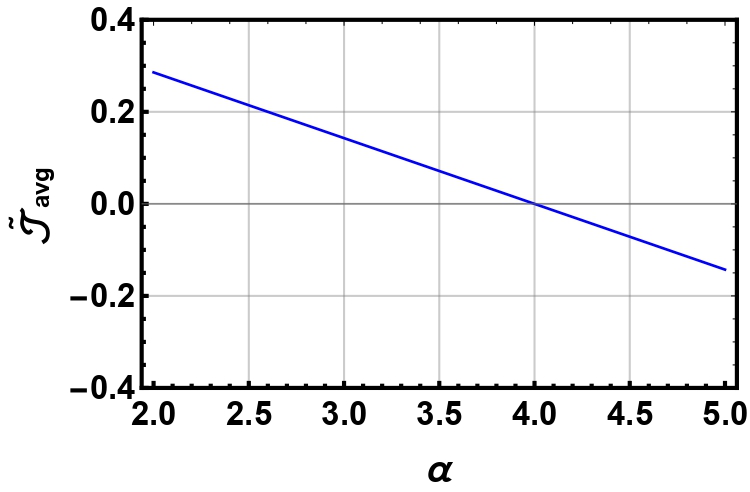} 
\caption{Variation of the average payoff $\mathcal{\tilde{J}}_{\text{avg}}$ with respect to the parameter ($\alpha$) for the channel given by Eq.~\eqref{channel}. A negative payoff denotes the range of parameters for which the channel is non-NPT-breaking.}\label{plot3}
\centering
\end{figure}

    Since there exist a decomposable witness operator corresponding to every non-NPT-breaking channel, our method can be applied for the certification of all non-NPT-breaking channels by choosing a suitable payoff function corresponding to that witness.
\\
\end{document}